\documentclass[letterpaper, 10 pt, conference]{ieeeconf} 
\IEEEoverridecommandlockouts  


\usepackage{amsmath}
\usepackage{amssymb}
\usepackage{amsthm}
\usepackage{mathtools}
\usepackage{derivative}
\usepackage{xcolor}
\usepackage{bm}

\usepackage[noadjust]{cite}
\usepackage{url}

\usepackage{graphicx}
\usepackage[export]{adjustbox} 
\graphicspath{{figures/}}

\newtheorem{theorem}{Theorem}
\newtheorem{lemma}{Lemma}

\newtheorem{corollary}{Corollary}

\theoremstyle{definition}
\newtheorem{definition}{Definition}

\newcommand{\argmin}{\operatornamewithlimits{arg\,min}}

\newcommand{\R}{\mathbb{R}}

\newcommand{\mc}[1]{\mathcal{#1}}
\newcommand{\T}{^\top}

\newcommand{\Kinf}{\mathcal{K}_{\infty}}

\newcommand{\map}[3]{#1\,:\,#2\rightarrow #3}
\newcommand{\Kinfe}{\mathcal{K}_{\infty}^e}
\newcommand{\rank}{\operatorname{rank}}
\newcommand{\Int}{\operatorname{Int}}
\newcommand{\diag}{\operatorname{diag}}


\newcommand{\bzero}{\mathbf{0}}


\renewcommand{\bf}{\mathbf{f}} 
\newcommand{\bg}{\mathbf{g}}

\newcommand{\bk}{\mathbf{k}}


\newcommand{\bp}{\mathbf{p}}
\newcommand{\bq}{\mathbf{q}}

\newcommand{\bu}{\mathbf{u}}
\newcommand{\bv}{\mathbf{v}}

\newcommand{\bx}{\mathbf{x}}
\newcommand{\by}{\mathbf{y}}

\newcommand{\bA}{\mathbf{A}}
\newcommand{\bB}{\mathbf{B}}
\newcommand{\bC}{\mathbf{C}}
\newcommand{\bD}{\mathbf{D}}

\newcommand{\bG}{\mathbf{G}}


\newcommand{\boldeta}{\bm{\eta}}

\newcommand{\bzeta}{\bm{\zeta}}
\newcommand{\bomega}{\bm{\omega}}

\newcommand{\Q}{\mathcal{Q}}


\usepackage{xcolor}
\definecolor{myblue}{RGB}{49, 114, 174}
\definecolor{myred}{rgb}{0.796, 0.235, 0.2}
\definecolor{mygreen}{rgb}{0.22, 0.596, 0.149}
\definecolor{mypurple}{rgb}{0.584,0.345,0.698}

\usepackage{blindtext}
\usepackage{fancyhdr}

\fancyhf{}

\fancyfoot[c]{}
\fancypagestyle{FirstPage}{
\lfoot{\footnotesize \copyright 2024 IEEE.  Personal use of this material is permitted.  Permission from IEEE must be obtained for all other uses, in any current or future media, including reprinting/republishing this material for advertising or promotional purposes, creating new collective works, for resale or redistribution to servers or lists, or reuse of any copyrighted component of this work in other works.} 
}

\title{\textbf{Constructive Safety-Critical Control: Synthesizing Control Barrier Functions for Partially Feedback Linearizable Systems}}

\author{Max H. Cohen, Ryan K. Cosner, and Aaron D. Ames %
\thanks{This research was supported by BP
and NSF CPS Award \#1932091.}
\thanks{The authors are with the Department of Mechanical and Civil Engineering, California Institute of Technology, Pasadena, CA \texttt{\{maxcohen,rkcosner,ames\}@caltech.edu}.}
}

\begin{document}
\maketitle

\begin{abstract}
    Certifying the safety of nonlinear systems, through the lens of set invariance and control barrier functions (CBFs), offers a powerful method for controller synthesis, provided a CBF can be constructed.
    This paper draws connections between partial feedback linearization and CBF synthesis.
    We illustrate that when a control affine system is input-output linearizable with respect to a smooth output function, then, under mild regularity conditions, one may 
    extend any safety constraint defined on the output to a CBF for the full-order dynamics.
    These more general results are specialized to robotic systems where the conditions required to synthesize CBFs simplify. The CBFs constructed from our approach are applied and verified in simulation and hardware experiments on a quadrotor.
\end{abstract}

\section{Introduction}
\label{sec:introduction}
\thispagestyle{FirstPage}
Safety has emerged as a fundamental requirement for modern control systems.  With safety framed as set invariance, control barrier functions (CBFs) have become a popular tool for designing controllers that endow systems with safety guarantees \cite{AmesTAC17}. 
Given a CBF, one may 
construct a controller enforcing set invariance using convex optimization \cite{AmesTAC17} or smooth universal formulas \cite{CohenLCSS23,KrsticTAC24}. Synthesizing a CBF-based controller, however, first requires constructing a valid CBF -- a task often cited as the greatest challenge of CBF-based approaches. For low-dimensional systems, computational techniques such as sum-of-squares programming \cite{xu2017correctness,HovakimyanLCSS23} and Hamilton-Jacobi reachability \cite{HerbertCDC21,wabersich2023data} often provide a viable pathway to address such challenges. Alternatively, one may construct hand-crafted CBFs for specific systems \cite{SreenathACC16-quad}. Yet each of these approaches tends to scale poorly with the dimension and complexity of the system.

A popular approach to constructing CBFs for complex high-dimensional systems is via backstepping \cite{AndrewCDC22,CohenARC24,AbelTAC23}. This approach effectively extends a safety constraint that is not controlled invariant to a CBF defining a control invariant set (the zero superlevel set) contained within the original safety constraint. This is accomplished by imposing a particular layered structure on the dynamics, defining smooth virtual CBF-based controllers \cite{CohenLCSS23} at each layer, and then ``backstepping'' through such controllers to compute a CBF for the overall system. The main limitations of this approach are the structural requirements of the dynamics and the controllability assumptions imposed on each layer, which precludes the direct application of such ideas to underactuated systems. 

\begin{figure}[t]
    \centering
    \includegraphics[width=0.45\textwidth]{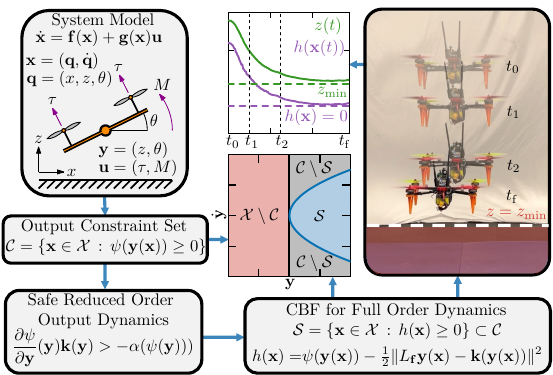}
    \vspace{-0.3cm}
    \caption{We present a methodology to systematically generate control barrier functions for high-dimensional underactuated systems from inequality constraints on the system's output. A video of an experimental demonstration of our approach can be found at \texttt{https://youtu.be/GYvQjcojLIQ}.}
    \vspace{-0.75cm}
    \label{fig:master_figure}
\end{figure}

Similar to backstepping, high order CBFs (HOCBFs) \cite{SreenathACC16,XuAutomatica18,WeiTAC22,TanTAC22} extend a safety constraint to a barrier-like function that may be used to enforce forward invariance of a safe set contained within the constraint set.
In contrast to backstepping, HOCBFs place no structural requirements on the dynamics other than that they are control affine. Inspired by input-output linearization \cite{Isidori}, such approaches treat the safety constraint as an output, differentiate this output until the input appears, and then impose CBF-like conditions on the highest derivative of the output. However, by treating the safety constraint as an output, these approaches implicitly require such a constraint to have a uniform relative degree on the safe set, which is restrictive in the context of CBFs \cite{jankovic2018robust}. As noted in \cite{CohenARC24, TanTAC22, Cohen}, even simple safety constraints may not have a uniform relative degree and, in such a situation, the functions constructed following the approaches in \cite{SreenathACC16,XuAutomatica18,WeiTAC22,TanTAC22} may not meet the criteria of a HOCBF. Similar limitations arise when using such a methodology to construct CBFs, rather than HOCBFs, from a high relative degree safety constraint \cite{BreedenAutomatica23}.

In this paper, we demonstrate how techniques from feedback linearization \cite{Isidori} facilitate the construction of CBFs, with an emphasis on the application of such ideas to underactuated robotic systems. 
Instead of treating safety constraints directly as outputs, as in prior works such as \cite{AbelTAC23,SreenathACC16,XuAutomatica18,WeiTAC22,BreedenAutomatica23}, we define the states relevant to the safety constraint as outputs. Leveraging the structural properties of the resulting output dynamics, we employ methods from \cite{AndrewCDC22,CohenARC24} to construct CBFs for the full-order dynamics, thereby relaxing the restrictive uniform relative degree requirements found in existing high relative degree CBF frameworks.
Specifically, we establish that when a nonlinear control system is input-output linearizable with respect to a smooth output function, then, under mild regularity conditions, one may extend any smooth inequality constraint on the output to a CBF for the full-order system
(Sec. \ref{sec:feedback-lin}). We illustrate the utility of these results by specializing them to robotic systems where the conditions required to construct CBFs simplify  (Sec. \ref{sec:underactuated}). The benefits of our approach are highlighted through both simulations and hardware demonstrations (Sec. \ref{sec:sims}). 

To summarize, the contributions of this paper are twofold:
\begin{itemize}
    \item We present a framework for constructing CBFs for high-dimensional and underactuated systems inspired by the methods developed in \cite{AndrewCDC22}.  In contrast to \cite{AndrewCDC22}, we establish the existence of a smooth controller required for the initial step in the CBF backstepping procedure. Furthermore, we formally characterize the properties required of the system output, safety constraint, and candidate CBF to ensure the applicability of the techniques described in \cite{AndrewCDC22} to general control affine systems.
    \item We present extensive numerical examples illustrating the design of CBFs for various underactuated robotic systems and apply the developed theory on a quadrotor, which constitutes the first demonstration of CBF backstepping on hardware.
\end{itemize}

\noindent {\bfseries Notation.}
Define $\partial \mc{S}$ and $\Int(\mc{S})$ as the boundary and interior of a set $\mc{S}$. A continuous function $\alpha\,:\,\R\rightarrow\R$ is said to be an extended class $\Kinf$ function ($\alpha\in\Kinfe$) if i) $\alpha(0)=0$, ii) $\alpha$ is strictly increasing, iii) $\lim_{r\rightarrow\pm\infty}\alpha(r)=\pm\infty$. For smooth functions
$\by\,:\,\R^n\rightarrow\R^m$ and 
$\bg\,:\,\R^n\rightarrow\R^{n\times m}$, we define 
$L_{\bg}\by(\bx)\coloneqq \pdv{\by}{\bx}(\bx)\bg(\bx)$ as the Lie derivative of $\by$ along 
$\bg$
with higher order Lie derivatives defined as in \cite{Isidori}.

\section{Preliminaries and Problem Formulation}\label{sec:prelim}
Consider a nonlinear control affine system:
\begin{equation}\label{eq:control-affine-dyn}
    \dot{\bx} = \bf(\bx) + \bg(\bx)\bu,
\end{equation}
with state $\bx\in \mc{X}\subseteq\R^n$ and control input $\bu\in \R^m$, where $\map{\bf}{\mc{X}}{\R^n}$ 
and $\map{\bg}{\mc{X}}{\R^{n\times m}}$ 
are smooth (differentiable as many times as necessary) on the open and connected set $\mathcal{X}$. By taking $\bu=\bk(\bx)$ with $\map{\bk}{\mc{X}}{\R^m}$ locally Lipschitz
we obtain the closed-loop system $\dot{\bx} = \bf(\bx) + \bg(\bx)\bk(\bx)$,
which, for each initial condition $\bx_0\in \mc{X}$, produces a continuously differentiable trajectory $\map{\bx}{I(\bx_0)}{\mc{X}}$ defined on a maximal interval of existence $I(\bx_0)\subseteq\R_{\geq0}$. A set $\mc{S}\subset \mc{X}$ is said to be \emph{forward invariant} for the closed-loop system if, for each $\bx_0\in \mc{S}$, the resulting trajectory satisfies $\bx(t)\in \mc{S}$ for all $t\in I(\bx_0)$. 
A popular approach to designing controllers enforcing 
forward invariance
is through CBFs.

\begin{definition}[\cite{AmesTAC17}]
    A continuously differentiable function $h\,:\,\mc{X}\rightarrow\R$ defining a set $\mc{S}\subset \mc{X}$ as:
    \begin{equation}\label{eq:safe-set}
        \begin{aligned}
            \mc{S} = & \{\bx\in \mc{X}\,:\,h(\bx)\geq0\}, \\ 
        \end{aligned}
    \end{equation}
    is said to be a CBF for \eqref{eq:control-affine-dyn} on $\mc{S}$ if there exists $\alpha\in \Kinfe$ and an open set $\mc{E}\subseteq \mc{X}$ satisfying $\mc{S}\subset \mc{E}$ such that for all $\bx\in \mc{E}$:
    \begin{equation}\label{eq:cbf}
        \sup_{\bu\in\R^m} \dot{h}(\bx,\bu) = \sup_{\bu\in\R^m}\left\{L_{\bf}h(\bx) + L_{\bg}h(\bx)\bu \right\}> - \alpha(h(\bx)).
    \end{equation}
    When  $h$ is a CBF, we say that $\mc{S}$ as in \eqref{eq:safe-set} is a \emph{safe set}.
\end{definition}

The existence of a CBF
implies the existence of a locally Lipschitz feedback controller $\map{\bk}{\mc{E}}{\R^m}$ enforcing the forward invariance of $\mc{S}$ \cite{AmesTAC17,jankovic2018robust}. One example of such a controller is the optimization-based safety filter:
\begin{equation}\label{eq:cbf-qp}
    \begin{aligned}
        \bk(\bx) = && \argmin_{\bu\in\R^m}\quad & \tfrac{1}{2}\|\bu - \bk_{\rm{d}}(\bx)\|^2  \\
               && \rm{s.t.} \quad & L_{\bf}h(\bx) + L_{\bg}h(\bx)\bu \geq - \alpha(h(\bx)), 
    \end{aligned}
\end{equation}
where $\map{\bk_{\rm{d}}}{\mc{X}}{\R^m}$ is a desired controller. The main objective of this paper is to systematically construct CBFs using methods 
from
feedback linearization \cite{Isidori}. Central to our approach is the notion of relative degree.
\begin{definition}[\cite{Isidori}]
    A smooth function $\map{\by}{\mc{X}}{\R^p}$ is said to have \emph{relative degree}\footnote{A vector-valued output may have different relative degrees for each of its components. For simplicity of notation, we focus on outputs whose components share the same relative degree.
    } $\gamma\in\mathbb{N}$ with respect to \eqref{eq:control-affine-dyn} on an open set $\mc{E}\subseteq\mc{X}$ if for all $\bx\in\mc{E}$:
    \begin{equation}
        \begin{aligned}
            \text{i)}\quad & L_{\bg}L_{\bf}^{i}\by(\bx) = \bzero,\quad \forall i\in\{0,\dots,\gamma-2\}, \\
            \text{ii)}\quad  & \rank(L_{\bg}L_{\bf}^{\gamma-1}\by(\bx))=p.
        \end{aligned}
    \end{equation}
\end{definition}
Let $\by$ have relative degree $\gamma$ on $\mc{E}\subseteq\mc{X}$ and define:
\begin{equation}
    \boldeta = \begin{bmatrix}
        \boldeta_1 \\ \vdots \\ \boldeta_{\gamma}
    \end{bmatrix}
    \coloneqq
    \begin{bmatrix}
        \by(\bx) \\ \vdots \\ L_{\bf}^{\gamma-1}\by(\bx)
    \end{bmatrix}
    \in\R^{p\gamma},
\end{equation}
noting that the output dynamics are given by:
\begin{equation}\label{eq:output-dyn}
    \dot{\boldeta} = \begin{bmatrix}
        \dot{\boldeta}_1 \\ \vdots \\ \dot{\boldeta}_{\gamma-1} \\ \dot{\boldeta}_{\gamma}
    \end{bmatrix}
    =
    \begin{bmatrix}
        \boldeta_2 \\ \vdots \\ L_{\bf}^{\gamma-1}\by(\bx) \\ L_{\bf}^{\gamma}\by(\bx) + L_{\bg}L_{\bf}^{\gamma-1}\by(\bx)\bu
    \end{bmatrix}.
\end{equation}
Given a smooth output $\map{\by}{\mc{X}}{\R^p}$ and a smooth inequality constraint $\map{\psi}{\R^p}{\R}$ on $\by$ defining a \emph{constraint set}:
\begin{equation}\label{eq:C}
    \mc{C} \coloneqq \{\bx\in\mc{X}\,:\, \psi(\by(\bx)) \geq 0\},
\end{equation}
that is not necessarily controlled invariant, our goal is to construct a CBF and corresponding safe set $\mc{S}\subset\mc{C}$ so that enforcing forward invariance of $\mc{S}$ leads to satisfaction of the output constraint.

\section{CBFs for Feedback Linearizable Systems}\label{sec:feedback-lin}
In this section, we establish that when \eqref{eq:control-affine-dyn} is 
partially feedback linearizable with respect to a smooth output function then, under mild regularity conditions, one may construct a CBF and corresponding safe set whose forward invariance implies satisfaction of the output constraint.
The following lemma is the starting point of our approach and outlines the regularity conditions that $\psi$ must satisfy.

\begin{lemma}\label{lemma:smooth}
    Let $\psi\,:\,\R^p\rightarrow\R$ be a smooth function defining a set
    $\mc{C}_1\subset\R^p$ as:
    \begin{equation}\label{eq:C1}
        \mc{C}_1 \coloneqq \{\by\in\R^p\,:\,\psi(\by) \geq 0\}.
    \end{equation}
    Let $\mc{D}_1\supset\mc{C}_1$ be an open set and suppose that:
    \begin{equation}\label{eq:grad-psi}
        \pdv[style-frac=\tfrac]{\psi}{\by}(\by) \neq \bzero,\quad \forall \by\in\mc{D}_1\setminus\Int(\mc{C}_1).
    \end{equation}
    Then, for any smooth $\alpha\in\Kinfe$ there exists a smooth $\bk_1\,:\,\mc{D}_1\rightarrow\R^p$ such that for all $\by\in\mc{D}_1$:
    \begin{equation}\label{eq:smooth-control}
        \pdv[style-frac=\tfrac]{\psi}{\by}(\by)\bk_1(\by) > -\alpha(\psi(\by)).
    \end{equation}
    For any $\sigma>0$, one example of such a function is given by:
    \begin{equation}\label{eq:sontag}
    \begin{aligned}
        \bk_1(\by) = & \phi\big(\alpha(\psi(\by)), \big\Vert\pdv[style-frac=\tfrac]{\psi}{\by}(\by)\big\Vert^2\big)\pdv[style-frac=\tfrac]{\psi}{\by}(\by)\T, \\
        \phi(a,b) = & \begin{cases}
            0 & b = 0 \\
            \frac{-a + \sqrt{a^2 + \sigma b^2}}{2b} & b \neq 0. \\
        \end{cases}
        \end{aligned}
    \end{equation}
\end{lemma}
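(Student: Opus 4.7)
The plan is to verify that the explicit controller $\bk_1$ given by \eqref{eq:sontag} satisfies the inequality \eqref{eq:smooth-control} pointwise and is smooth on all of $\mc{D}_1$. I would treat the inequality and the smoothness claim as two separate arguments, the first being almost algebraic and the second being the main subtlety.

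For the inequality, abbreviate $a(\by)\coloneqq \alpha(\psi(\by))$ and $b(\by)\coloneqq \|\tfrac{\partial\psi}{\partial\by}(\by)\|^2$. Plugging \eqref{eq:sontag} into $\tfrac{\partial\psi}{\partial\by}\bk_1$ yields $\phi(a,b)\,b$, which equals $0$ if $b=0$ and $(-a+\sqrt{a^2+\sigma b^2})/2$ if $b\neq 0$. When $b(\by)\neq 0$, the inequality $\tfrac{-a+\sqrt{a^2+\sigma b^2}}{2}>-a$ is equivalent to $\sqrt{a^2+\sigma b^2}>-a$, which holds strictly because $\sqrt{a^2+\sigma b^2}>|a|\geq -a$. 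When $b(\by)=0$, the hypothesis \eqref{eq:grad-psi} forces $\by\in\Int(\mc{C}_1)$, so $\psi(\by)>0$ and, since $\alpha\in\Kinfe$, $a(\by)=\alpha(\psi(\by))>0$; hence $0=\tfrac{\partial\psi}{\partial\by}(\by)\bk_1(\by)>-a(\by)$ as required.

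The main obstacle is verifying smoothness of $\bk_1$ at points where $b(\by)=0$, since the formula \eqref{eq:sontag} is piecewise-defined and the expression $(-a+\sqrt{a^2+\sigma b^2})/(2b)$ is not manifestly smooth at $b=0$. I would handle this with Sontag's rationalization trick: multiplying numerator and denominator by $a+\sqrt{a^2+\sigma b^2}$ gives, for $b\neq 0$,
\begin{equation*}
\phi(a,b)\,b \;=\; \frac{\sigma b^2}{2\bigl(a+\sqrt{a^2+\sigma b^2}\bigr)}, \qquad \bk_1(\by) = \frac{\sigma\,\|\tfrac{\partial\psi}{\partial\by}(\by)\|^2}{2\bigl(a(\by)+\sqrt{a(\by)^2+\sigma b(\by)^2}\bigr)}\,\pdv{\psi}{\by}(\by)\T.
\end{equation*}
By continuity this extended formula agrees with \eqref{eq:sontag} whenever it is well-defined. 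On the open set $\{\by\in\mc{D}_1:b(\by)\neq 0\}$ both expressions coincide and smoothness is immediate from smoothness of $\psi$, $\alpha$, and the square root away from zero. At a point $\by_0$ with $b(\by_0)=0$, the argument above shows $a(\by_0)>0$, so by continuity $a(\by)>0$ in a neighborhood of $\by_0$; consequently $a(\by)^2+\sigma b(\by)^2>0$ and $a(\by)+\sqrt{a(\by)^2+\sigma b(\by)^2}\geq 2a(\by)>0$ on that neighborhood, so the denominator in the rationalized expression is smooth and bounded away from zero, and the numerator $\sigma\,\|\tfrac{\partial\psi}{\partial\by}\|^2\,\tfrac{\partial\psi}{\partial\by}\T$ is smooth in $\by$.

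Combining the two cases, $\bk_1$ admits a smooth local representation on a neighborhood of every point of $\mc{D}_1$, hence is smooth on $\mc{D}_1$, and satisfies \eqref{eq:smooth-control} pointwise. I expect the inequality verification to be essentially mechanical, while the smoothness check at $b=0$ (which requires combining \eqref{eq:grad-psi} with the rationalization) is the real content of the lemma.
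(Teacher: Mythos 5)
Your proposal is correct and follows essentially the same route as the paper's proof: the same two-case verification of the inequality, and the same use of \eqref{eq:grad-psi} to show that $b(\by)=0$ forces $a(\by)>0$ so that the only potentially problematic points for smoothness never occur with $a\leq 0$. The only difference is that where the paper cites \cite{SontagSCL89,CohenLCSS23,CohenARC24} for the smoothness of $\phi$ on the set $\{(a,b):a>0\text{ or }b>0\}$, you carry out the underlying rationalization $\phi(a,b)=\tfrac{\sigma b}{2(a+\sqrt{a^2+\sigma b^2})}$ explicitly, which makes the argument self-contained.
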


\begin{proof}
    Define $a(\by)\coloneqq\alpha(\psi(\by))$, $b(\by)\coloneqq \|\pdv{\psi}{\by}(\by)\|^2$, and $\mathcal{W} \coloneqq \{(a,b)\in\R^2\,:\,a> 0\;\text{or}\; b > 0\}$.
    Using a similar argument to those in \cite{SontagSCL89,CohenLCSS23,CohenARC24}, one can show that $(a,b)\mapsto \phi(a,b)$ from \eqref{eq:sontag} is smooth on $\mc{W}$. It follows from \eqref{eq:grad-psi} that, for $\by\in\mc{D}_1$, $b(\by)=0$ only if $\by\in\Int(\mc{C}_1)$, which implies that, for $\by\in\mc{D}_1$, $b(\by)=0$ only if $a(\by)>0$. Hence, for each $\by\in\mc{D}_1$, we have $(a(\by),b(\by))\in\mc{W}$. Since $(a,b)\mapsto\phi(a,b)$ is smooth on $\mc{W}$, $\by\mapsto a(\by),b(\by)$ are smooth on $\mc{D}_1$, and $(a(\by),b(\by))\in\mc{W}$ for each $\by\in\mc{D}_1$, $\by\mapsto \phi(a(\by),b(\by))$ is smooth on $\mc{D}_1$, implying that $\by\mapsto \bk_1(\by)$ from \eqref{eq:sontag} is smooth on $\mc{D}_1$. To show \eqref{eq:smooth-control} we compute:
    \begin{equation}\label{eq:smooth-control-expression}
        \begin{aligned}
            \pdv[style-frac=\tfrac]{\psi}{\by}(\by)\bk_1(\by) = & \phi(a(\by), b(\by))\big\Vert\pdv[style-frac=\tfrac]{\psi}{\by}(\by)\big\Vert^2 \\ 
            = & \begin{cases}
                0 & b(\by)=0 \\
                \frac{-a(\by) + \sqrt{a(\by)^2 + \sigma b(\by)^2}}{2} & b(\by)\neq0.
            \end{cases}
        \end{aligned}
    \end{equation}
    Recall from \eqref{eq:grad-psi} that when $b(\by)=0$ and $\by\in\mc{D}_1$, we have $a(\by)>0$, implying that $0>-a(\by)$. Moreover, when $b(\by)\neq0$, one can verify that $\tfrac{-a(\by) + \sqrt{a(\by)^2 + \sigma b(\by)^2}}{2} > -a(\by)$ for any $\sigma>0$.
    Using these observations to lower bound \eqref{eq:smooth-control-expression} implies that \eqref{eq:smooth-control} holds, as desired.
\end{proof}
The conditions in Lemma \ref{lemma:smooth} are equivalent to the statement that $\psi$ is a CBF for a single integrator $\dot{\by}=\bu$, an arguably mild requirement.
This does not require $\bx\mapsto\psi(\by(\bx))$ to have a uniform relative degree on $\mc{C}$, which would be overly restrictive. Indeed, the gradient of relevant safety constraints may vanish at points on $\Int(\mc{C})$ \cite{TanTAC22,Cohen,CohenARC24}. Instead, we will require the output $\by$ to have a relative degree, which is less restrictive\footnote{That is, $\by$ may have a relative degree even when $\psi$ does not. A simple example illustrating this point is the double integrator with state $\bx=(x_1,x_2)\in\R^2$, output $\by(\bx)=x_1$, and constraint $\psi(\by(\bx))=1 - x_1^2$. This phenomenon is also present in the examples considered in Sec. \ref{sec:sims} and may arise when $\mc{C}_1$ from \eqref{eq:C1} is a compact set (cf. \cite[Footnote 4]{CohenARC24}).}.
To this end, let $\by$ have relative degree $\gamma\in\mathbb{N}$ on $\mc{E}\subseteq\mc{X}$ and consider the output dynamics \eqref{eq:output-dyn} of system \eqref{eq:control-affine-dyn}. The output dynamics in \eqref{eq:output-dyn} are in \emph{strict feedback form} and are thus amenable to backstepping \cite{AndrewCDC22,CohenARC24}. We will thus leverage backstepping for \eqref{eq:output-dyn} to construct a CBF, which may be used to enforce satisfaction of the original output constraint on \eqref{eq:control-affine-dyn}. 
Following 
\cite{AndrewCDC22}, we propose the CBF candidate:
\begin{equation}\label{eq:h-fbl}
    \begin{aligned}
        h(\bx) \coloneqq & \psi(\by(\bx)) - \sum_{i = 1}^{\gamma-1}\tfrac{1}{2\mu_i}\|L_{\bf}^{i}\by(\bx) - \bk_{i}(\bzeta_{i}(\bx))\|^2 \\
        = &\psi(\boldeta_1) - \sum_{i = 1}^{\gamma-1}\tfrac{1}{2\mu_i}\|\boldeta_{i+1} - \bk_{i}(\bzeta_{i})\|^2,
    \end{aligned}
\end{equation}
where $\psi$ defines $\mc{C}_1\subset\R^p$ as in \eqref{eq:C1}, $\mu_i>0$ for $i\in\{1,\dots,\gamma-1\}$, $\bzeta_j\coloneqq (\boldeta_1,\boldeta_2,\dots,\boldeta_j)\in\R^{pj}$,
$\map{\bk_1}{\mc{D}_1}{\R^p}$ is any smooth function satisfying \eqref{eq:smooth-control} for all $\boldeta_1\in\mc{D}_1\supset\mc{C}_1$ for a smooth globally Lipschitz $\alpha\in\Kinfe$ and:
\begin{equation}\label{eq:virtual-controllers}
\begin{aligned}
    \bk_{2}(\bzeta_2) \coloneqq & \dot{\bk}_{1}(\bzeta_2) + \mu_{1}\pdv{\psi}{\boldeta_1}(\boldeta_1)\T - \frac{\lambda_1}{2}(\boldeta_2 - \bk_{1}(\boldeta_1)) \\
    \bk_{i+1}(\bzeta_{i+1}) \coloneqq & \dot{\bk}_{i}(\bzeta_{i+1}) - \mu_{i}(\boldeta_i - \bk_{i-1}(\bzeta_{i-1})) \\ &- \frac{\lambda_i}{2}(\boldeta_{i+1} - \bk_{i}(\bzeta_i)),\; \forall i\in\{2,\dots,\gamma-2\}, \\
\end{aligned}
\end{equation}
where $\lambda_i>0$ for each $i\in\{1,\dots,\gamma-2\}$. The CBF candidate in \eqref{eq:h-fbl} defines a set $\mc{S}$ as in \eqref{eq:safe-set}, which satisfies $\mc{S}\subset\mc{C}$.
Before proceeding, it will be useful to define $\mc{D}\coloneqq \{\bx\in\mc{X}\,:\,\by(\bx)\in\mc{D}_1\}$, where $\mc{D}_1\subset\R^p$ is defined as in Lemma \ref{lemma:smooth}. We now illustrate that when $\by$ has a 
relative degree on $\mc{S}$ and $\psi$ satisfies \eqref{eq:grad-psi}, then \eqref{eq:h-fbl} is a CBF for \eqref{eq:control-affine-dyn}.

\begin{theorem}\label{theorem:cbf-fbl}
    Consider system \eqref{eq:control-affine-dyn} with smooth output $\map{\by}{\mc{X}}{\R^p}$, the output constraint $\map{\psi}{\R^p}{\R}$ defining a constraint set $\mc{C}\subset\mc{X}$ as in \eqref{eq:C}, and the CBF candidate $h\,:\,\mc{X}\rightarrow\R$ from \eqref{eq:h-fbl} defining a set $\mc{S}\subset\mc{C}$ as in \eqref{eq:safe-set}. Provided that $\psi$ satisfies \eqref{eq:grad-psi} on a set $\mc{D}_1\supset\mc{C}_1$, with $\mc{C}_1\subset\R^p$ as in \eqref{eq:C1}, $\by$ has relative degree $\gamma$ on a set $\mc{E}\supset\mc{S}$ satisfying $\mc{E}\subseteq\mc{D}$, and $\lambda_i \geq \ell_\alpha$ for each $i\in\{1,\dots,\gamma-2\}$, where $\ell_{\alpha}$ is a Lipschitz constant of $\alpha\in\Kinfe$ from \eqref{eq:smooth-control}, then $h$ is a CBF for \eqref{eq:control-affine-dyn} on $\mc{S}$. Moreover, any locally Lipschitz controller $\bk\,:\,\mc{E}\rightarrow\R^m$ that renders $\mc{S}$ forward invariant for the closed-loop system \eqref{eq:control-affine-dyn} ensures that $\bx(t)\in\mc{C}$ for all $t\in I(\bx_0)$.
\end{theorem}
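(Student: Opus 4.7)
The plan is to prove the two conclusions in sequence. The second (forward invariance of $\mc{S}$ implies $\bx(t)\in\mc{C}$) is immediate from the form of \eqref{eq:h-fbl}: since $h(\bx)\geq 0$ forces $\psi(\by(\bx))\geq\sum_{i=1}^{\gamma-1}\tfrac{1}{2\mu_i}\|\boldeta_{i+1}-\bk_i(\bzeta_i)\|^2\geq 0$, we have $\mc{S}\subset\mc{C}$, so any controller rendering $\mc{S}$ forward invariant keeps $\bx(t)\in\mc{C}$ for all $t\in I(\bx_0)$. The bulk of the work is to verify that $h$ satisfies the CBF inequality \eqref{eq:cbf} on $\mc{E}$.

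To this end, I would introduce the backstepping errors $\be_i\coloneqq\boldeta_{i+1}-\bk_i(\bzeta_i)$ for $i=1,\ldots,\gamma-1$ and rewrite $h=\psi(\boldeta_1)-\sum_{i=1}^{\gamma-1}\tfrac{1}{2\mu_i}\|\be_i\|^2$. Because $\by$ has relative degree $\gamma$ on $\mc{E}$, $L_{\bg}L_{\bf}^{j-1}\by=\bzero$ for all $j\leq\gamma-1$, and since each $\bk_i$ is a function of $\bzeta_i=(\boldeta_1,\ldots,\boldeta_i)$ alone, it follows that $L_{\bg}\be_i=\bzero$ for $i\leq\gamma-2$ and $L_{\bg}\be_{\gamma-1}=L_{\bg}L_{\bf}^{\gamma-1}\by$. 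Hence $L_{\bg}h(\bx)=-\tfrac{1}{\mu_{\gamma-1}}\be_{\gamma-1}(\bx)\T L_{\bg}L_{\bf}^{\gamma-1}\by(\bx)$, which vanishes exactly when $\be_{\gamma-1}(\bx)=\bzero$ because $L_{\bg}L_{\bf}^{\gamma-1}\by$ has rank $p$ on $\mc{E}$. Off this slice the supremum in \eqref{eq:cbf} is $+\infty$ and the CBF inequality holds trivially, so it suffices to verify $L_{\bf}h(\bx)>-\alpha(h(\bx))$ whenever $\be_{\gamma-1}(\bx)=\bzero$.

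On this slice, I would compute $L_{\bf}h$ via the chain rule using the drift error dynamics $\dot\be_1=\mu_1\pdv{\psi}{\boldeta_1}\T-\tfrac{\lambda_1}{2}\be_1+\be_2$ and $\dot\be_i=-\mu_i\be_{i-1}-\tfrac{\lambda_i}{2}\be_i+\be_{i+1}$ for $2\leq i\leq\gamma-2$ that the virtual controllers \eqref{eq:virtual-controllers} are engineered to produce. The specific form of $\bk_2$ makes the $\pdv{\psi}{\boldeta_1}\be_1$ term generated by differentiating $\psi(\boldeta_1)$ cancel the $\pdv{\psi}{\boldeta_1}\T$ term in $\dot\be_1$, leaving $L_{\bf}h=\pdv{\psi}{\boldeta_1}(\boldeta_1)\bk_1(\boldeta_1)+\sum_{i=1}^{\gamma-2}\tfrac{\lambda_i}{2\mu_i}\|\be_i\|^2+\Xi(\be)$, where $\Xi$ collects residual inner products $\be_i\T\be_{i+1}$ with coefficients controlled by the $\mu_j$. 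Lemma~\ref{lemma:smooth} then supplies $\pdv{\psi}{\boldeta_1}\bk_1(\boldeta_1)>-\alpha(\psi(\boldeta_1))$ with strict slack, and $\Xi$ can be absorbed into the quadratic surplus $\sum\tfrac{\lambda_i}{2\mu_i}\|\be_i\|^2$ via Young's inequality.

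The final step turns the lower bound involving $-\alpha(\psi(\boldeta_1))$ into one involving $-\alpha(h(\bx))$. Global Lipschitz continuity of $\alpha$ with constant $\ell_\alpha$ gives $\alpha(\psi(\boldeta_1))-\alpha(h)\leq\ell_\alpha(\psi(\boldeta_1)-h)=\ell_\alpha\sum_{i=1}^{\gamma-1}\tfrac{1}{2\mu_i}\|\be_i\|^2$, equivalently $-\alpha(h)\leq-\alpha(\psi(\boldeta_1))+\ell_\alpha\sum\tfrac{1}{2\mu_i}\|\be_i\|^2$. The hypothesis $\lambda_i\geq\ell_\alpha$ is exactly what is needed for the nonnegative contributions $\sum\tfrac{\lambda_i}{2\mu_i}\|\be_i\|^2$ in $L_{\bf}h$ to dominate the Lipschitz deficit, producing $L_{\bf}h>-\alpha(h)$ with the choice $\bar\alpha\coloneqq\alpha\in\Kinfe$. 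The main obstacle I anticipate is the bookkeeping for $\Xi(\be)$: one has to balance the strict slack from Lemma~\ref{lemma:smooth}, the quadratic surplus from $\lambda_i\geq\ell_\alpha$, and the Young's-inequality constants in $\Xi$ so that the combined inequality is strict at every point of the slice, including where a proper subset of the $\be_i$ vanishes.
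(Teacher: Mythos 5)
Your overall strategy is sound and reaches the right conclusion, but it takes a genuinely different route from the paper. The paper's proof does not split on whether $L_{\bg}h$ vanishes: it exhibits one explicit smooth controller $\bk(\bx) = L_{\bg}L_{\bf}^{\gamma-1}\by(\bx)^{\dagger}[\,\cdots]$ (built from the right pseudo-inverse, which exists by the rank-$p$ condition) and invokes the argument of \cite[Thm.~5]{AndrewCDC22} to conclude $\dot h(\bx,\bk(\bx)) > -\alpha(h(\bx))$ on all of $\mc{E}$, so the supremum condition follows immediately. Your version is the Nagumo-style pointwise verification: you correctly compute $L_{\bg}h = -\tfrac{1}{\mu_{\gamma-1}}\be_{\gamma-1}\T L_{\bg}L_{\bf}^{\gamma-1}\by$, observe that full row rank forces this to vanish exactly on the slice $\be_{\gamma-1}=\bzero$, dispatch the off-slice case with an unbounded supremum, and check the drift condition on the slice. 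What the paper's route buys is a concrete smooth safe controller as a byproduct (relevant to the smooth-safety-filter theme of the paper); what yours buys is a self-contained argument that never needs the pseudo-inverse. The core backstepping algebra (cancellation of the $\pdv{\psi}{\boldeta_1}\be_1$ term by the $\mu_1\pdv{\psi}{\boldeta_1}\T$ term in $\bk_2$, and the Lipschitz step $-\alpha(\psi(\boldeta_1)) \geq -\alpha(h) - \ell_\alpha\sum_i\tfrac{1}{2\mu_i}\|\be_i\|^2$ combined with $\lambda_i \geq \ell_\alpha$) is identical in both.

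One caveat on your handling of $\Xi(\be)$: absorbing the residual inner products $\be_i\T\be_{i+1}$ by Young's inequality into the surplus $\sum_i\tfrac{\lambda_i-\ell_\alpha}{2\mu_i}\|\be_i\|^2$ cannot work in the edge case $\lambda_i = \ell_\alpha$ permitted by the hypotheses, since the surplus is then zero and the strict slack from Lemma~\ref{lemma:smooth} does not scale quadratically in $\|\be_i\|$. The intended mechanism is that the $\mu_i(\boldeta_i - \bk_{i-1}(\bzeta_{i-1}))$ terms in \eqref{eq:virtual-controllers} are designed so that the cross terms telescope and cancel \emph{identically} (just as the $\mu_1\pdv{\psi}{\boldeta_1}\T$ term does at the first level), i.e.\ $\Xi \equiv 0$, leaving nothing to absorb. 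You should verify that cancellation explicitly rather than fall back on Young's inequality; with it, the rest of your argument closes.
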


\begin{proof}
    The proof follows a similar argument to that of \cite[Thm. 5]{AndrewCDC22}. Since $\by$ has relative degree $\gamma$ on $\mc{E}$, the matrix $L_{\bg}L_{\bf}^{\gamma-1}\by(\bx)\in\R^{p\times m}$ has rank $p$ and is thus right pseudo-invertible for each $\bx\in\mc{E}$. 
    Now, note that since $\mc{S}\subset\mc{E}\subseteq\mc{D}$ and $\boldeta_1\mapsto\bk_1(\boldeta_1)$ satisfies \eqref{eq:smooth-control} for all $\boldeta_1\in\mc{D}_1$, $\bx\mapsto\bk_1(\by(\bx))$ satisfies \eqref{eq:smooth-control} for all $\bx\in\mc{E}\subseteq\mc{D}$, where $\bk_1$ exists since $\psi$ satisfies the conditions of Lemma \ref{lemma:smooth}. It then follows that since $\lambda_i \geq \ell_\alpha$ for each $i\in\{1,\dots,\gamma-2\}$, each $\bk_i$ satisfies the same conditions as those in the proof of \cite[Thm. 5]{AndrewCDC22}, which implies that the CBF candidate $h$ in \eqref{eq:h-fbl} satisfies the same conditions as those in \cite[Sec. IV]{AndrewCDC22}. Hence, by following the same steps as in the proof of \cite[Thm. 5]{AndrewCDC22}, one may show that the smooth feedback controller:
    \begin{equation*}
        \begin{aligned}
            \bk(\bx)\coloneqq L_{\bg}L_{\bf}^{\gamma-1}\by(\bx)^{\dagger}\bigg[\dot{\bk}_{\gamma-1}(\boldeta(\bx)) - L_{\bf}^{\gamma}\by(\bx) \\ - \mu_{\gamma-1}\Big(\boldeta_{\gamma-1}(\bx)- \bk_{\gamma-2}(\bzeta_{\gamma-2}(\bx))\Big)\\
            -\frac{\lambda_{\gamma-1}}{2}\Big(\boldeta_{\gamma}(\bx) - \bk_{\gamma-1}(\bzeta_{\gamma-1}(\bx))\Big)\bigg],
        \end{aligned}
    \end{equation*}
    where $(\cdot)^{\dagger}$ denotes the right psuedo-inverse and $\lambda_{\gamma-1}\geq\ell_{\alpha}$, satisfies $\dot{h}(\bx,\bk(\bx))>-\alpha(h(\bx))$ for all $\bx\in\mc{E}$, where $\alpha$ is from \eqref{eq:smooth-control}. Thus, for all $\bx\in\mc{E}$, we have:
    \begin{equation*}
        \sup_{\bu\in\R^m}\dot{h}(\bx,\bu) \geq \dot{h}(\bx,\bk(\bx)) > -\alpha(h(\bx)),
    \end{equation*}
    which implies that $h$ is a CBF for \eqref{eq:control-affine-dyn} on $\mc{S}\subset\mc{E}$.
    Since $\mc{S}\subset\mc{C}$ any locally Lipschitz controller enforcing the forward invariance of $\mc{S}$ ensures that $\bx(t)\in\mc{C}$ for all $t\in I(\bx_0)$.
\end{proof}

Theorem 1 highlights the interplay between the output $\by$, the safety constraint $\psi$, the system's actuation capabilities, and the ability to construct CBFs. 
By ensuring that $\by$ has a relative degree on $\mc{E}\supset\mc{S}$, \eqref{eq:control-affine-dyn} may be partially transformed into a strict feedback system \eqref{eq:output-dyn} 
on $\mc{E}$, enabling the application of backstepping \cite{AndrewCDC22} to construct a CBF.
Theorem \ref{theorem:cbf-fbl} characterizes the requirements on $\psi$, $\mc{S}$, and $\by$ for such techniques to be applicable to general control affine systems \eqref{eq:control-affine-dyn}, complimenting the ideas introduced in \cite{AndrewCDC22}, which focused on systems already in strict feedback form.
While using outputs to transform a system into strict feedback form is well-established in the backstepping literature \cite{Krstic}, and has been exploited in the context of CBFs \cite{AbelTAC23} by viewing $\psi$ as an output, Theorem \ref{theorem:cbf-fbl} is, to our knowledge, the first to make the explicit connection between more general outputs and the constructions of CBFs.
As demonstrated in Sec. \ref{sec:sims}, this connection has important practical implications as it
enables the application of such ideas to a broader class of systems than those originally considered in \cite{AndrewCDC22,CohenARC24}.
Moreover, by not treating $\psi$ as an output -- a common approach in works such as \cite{AbelTAC23,SreenathACC16,XuAutomatica18,WeiTAC22,TanTAC22,BreedenAutomatica23} -- this construction overcomes the restrictive uniform relative degree requirements on $\psi$ present in most high relative degree CBF techniques.

\section{CBFs for Underactuated Robotic Systems}\label{sec:underactuated}
We now specialize the previous results to 
robotic systems with generalized coordinates $\bq\in\Q\subseteq\R^n$ 
and dynamics:
\begin{equation}\label{eq:robot-dyn}
    \bD(\bq)\ddot{\bq} + \bC(\bq,\dot{\bq})\dot{\bq} + \bG(\bq) = \bB(\bq)\bu.
\end{equation}
Here, $\dot{\bq}\in\R^n$ is the generalized velocity, $\bD(\bq)\in\R^{n\times n}$ denotes the positive definite and symmetric inertia matrix, $\bC(\bq,\dot{\bq})\in\R^{n\times n}$ denotes the Coriolis matrix, $\bG(\bq)\in\R^n$ represents gravitational and other potential effects, $\bB(\bq)\in\R^{n\times m}$ is the actuation matrix, and $\bu\in\R^m$ is the control input. 
Note that by defining $\bx\coloneqq(\bq,\dot{\bq})\in\mc{X} =  \Q\times\R^n$
we may represent \eqref{eq:robot-dyn} as in \eqref{eq:control-affine-dyn} with 
dynamics:
\begin{equation}\label{eq:robot-dyn-control-affine}
    \underbrace{
    \begin{bmatrix}
        \dot{\bq} \\ -\bD(\bq)^{-1}\left[\bC(\bq,\dot{\bq})\dot{\bq} + \bG(\bq) \right]
    \end{bmatrix}}_{\bf(\bx)}
    ,\quad
    \underbrace{
    \begin{bmatrix}
        \bzero_{n\times m} \\ \bD(\bq)^{-1} \bB(\bq)
    \end{bmatrix}}_{\bg(\bx)}.
\end{equation}
Now, consider a twice continuously differentiable output $\map{\by}{\Q}{\R^p}$, 
which is used to define an output constraint $\map{\psi}{\R^p}{\R}$ and associated output constraint set:
\begin{equation}\label{eq:S1-robotic}
    \mc{C} \coloneqq \{\bq\in\Q\,:\,\psi(\by(\bq))\geq0\},
\end{equation}
defined in the configuration space $\mc{Q}$ of \eqref{eq:robot-dyn}.
Differentiating the output $\by$ twice along the vector fields in \eqref{eq:robot-dyn-control-affine} leads to $\ddot{\by} = L_{\bf}^{2}\by(\bq,\dot{\bq}) + \pdv[style-frac=\tfrac]{\by}{\bq}(\bq)\bD(\bq)^{-1}\bB(\bq)\bu$. Importantly, we see that the the $p\times m$ ``decoupling'' matrix:
\begin{equation}\label{eq:decoupling-matrix}
    \bA(\bq) \coloneqq L_{\bg}L_{\bf}\by(\bq) =  \pdv[style-frac=\tfrac]{\by}{\bq}(\bq)\bD(\bq)^{-1}\bB(\bq),
\end{equation}
depends only on the configuration, implying that the relative degree 
depends only on the configuration. Note that when $\bB(\bq)\equiv\bB\in\R^{n\times m}$ is constant, $m\leq n$, and $\rank(\bB)=m$, the output $\by(\bq)=\bB\T\bq$ always has relative degree 2 as one can check that $\rank(\bA(\bq))=m$ for all $\bq\in\mc{Q}$.
When $\by$ has relative degree 2, the CBF candidate from \eqref{eq:h-fbl} simplifies to:
\begin{equation}\label{eq:h-underactuated}
    h(\bx) = \psi(\by(\bq)) - \tfrac{1}{2\mu}\|\pdv[style-frac=\tfrac]{\by}{\bq}(\bq)\dot{\bq} - \bk_{\psi}(\by(\bq))\|^2,
\end{equation}
where $\mu>0$ and $\map{\bk_{\psi}}{\mc{D}_1}{\R^p}$ is any continuously differentiable function satisfying \eqref{eq:smooth-control} for all $\by(\bq)\in\mc{D}_1\supset\mc{C}_1$. The following corollary illustrates that \eqref{eq:h-underactuated} is a CBF for \eqref{eq:robot-dyn-control-affine} provided $\psi$ satisfies \eqref{eq:grad-psi} and \eqref{eq:decoupling-matrix} has full row rank on a set containing $\mc{C}$. 

\begin{corollary}\label{proposition:underactuated}
    Consider system \eqref{eq:robot-dyn-control-affine} with twice continuously differentiable output $\map{\by}{\Q}{\R^p}$, the configuration constraint $\map{\psi}{\R^p}{\R}$ defining a set $\mc{C}\subset\Q$ as in \eqref{eq:S1-robotic}, and the CBF candidate $\map{h}{\mc{X}}{\R}$ as in \eqref{eq:h-underactuated} defining a set $\mc{S}\subset\mc{C}\times\R^n$ as in \eqref{eq:safe-set}. Provided that $\psi$ satisfies \eqref{eq:grad-psi} on a set $\mc{D}_1\supset\mc{C}_1$, with $\mc{C}_1\subset\R^p$ as in \eqref{eq:C1}, $\rank(\bA(\bq))=p$ for all $\bq\in\mc{E}_1\supset\mc{C}$ with $\mc{E}\coloneqq\mc{E}_1\times\R^n\subseteq\mc{D}$, then $h$ is a CBF for \eqref{eq:robot-dyn-control-affine}. Moreover, any locally Lipschitz controller $\bk\,:\,\mc{E}\rightarrow\R^m$ that renders $\mc{S}$ forward invariant for the closed-loop system \eqref{eq:robot-dyn-control-affine} ensures that $\bq(t)\in\mc{C}$ for all $t\in I(\bx_0)$.
\end{corollary}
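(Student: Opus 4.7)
The plan is to reduce the corollary to a direct application of Theorem \ref{theorem:cbf-fbl} by specializing to the case $\gamma = 2$. The three things to verify are that (a) the output $\by$ has relative degree $\gamma = 2$ on $\mc{E} = \mc{E}_1 \times \R^n$, (b) the CBF candidate in \eqref{eq:h-underactuated} is exactly the $\gamma = 2$ instance of the general candidate \eqref{eq:h-fbl}, and (c) the gain-tuning hypotheses imposed by Theorem \ref{theorem:cbf-fbl} are vacuous in this setting. The hypothesis on $\psi$ carries over unchanged, so the Lemma \ref{lemma:smooth} construction of $\bk_\psi$ still applies.

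For the relative degree, since the output $\by$ depends only on $\bq$, its Jacobian with respect to $\bx = (\bq,\dot{\bq})$ has a zero block in the velocity components, i.e., $\pdv{\by}{\bx} = [\pdv{\by}{\bq},\,\bzero_{p\times n}]$. Pairing this with the structure of $\bg(\bx)$ in \eqref{eq:robot-dyn-control-affine}, whose top $n \times m$ block is zero, immediately gives $L_{\bg}\by(\bx) = \bzero$, which is condition (i) of the relative degree definition with $\gamma = 2$. Then $L_{\bf}\by(\bx) = \pdv{\by}{\bq}(\bq)\dot{\bq}$, and a direct computation yields $L_{\bg}L_{\bf}\by(\bx) = \pdv{\by}{\bq}(\bq)\bD(\bq)^{-1}\bB(\bq) = \bA(\bq)$. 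By hypothesis $\rank(\bA(\bq)) = p$ for all $\bq \in \mc{E}_1$, giving condition (ii) of the relative degree definition on all of $\mc{E} = \mc{E}_1 \times \R^n$.

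For items (b) and (c): setting $\gamma = 2$ in \eqref{eq:h-fbl} collapses the sum to a single term indexed by $i=1$ with $\bzeta_1 = \boldeta_1 = \by(\bq)$ and $\boldeta_2 = L_{\bf}\by(\bx) = \pdv{\by}{\bq}(\bq)\dot{\bq}$, which upon identifying $\mu_1 = \mu$ and $\bk_1 = \bk_\psi$ reproduces \eqref{eq:h-underactuated} exactly. The Lipschitz condition ``$\lambda_i \geq \ell_\alpha$ for all $i \in \{1,\dots,\gamma-2\}$'' is vacuous because the index set is empty when $\gamma = 2$; only the gain $\lambda_{\gamma-1} = \lambda_1$ associated with the final controller construction in the proof of Theorem \ref{theorem:cbf-fbl} needs to be taken sufficiently large, and this choice is internal to the proof rather than a constraint on $h$. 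The inclusions required by Theorem \ref{theorem:cbf-fbl} are satisfied because $\mc{S} \subset \mc{C}\times\R^n \subset \mc{E}_1\times\R^n = \mc{E}$ and $\mc{E} \subseteq \mc{D}$ is assumed. Invoking Theorem \ref{theorem:cbf-fbl} then yields that $h$ is a CBF on $\mc{S}$, and since $\mc{S}\subset\mc{C}\times\R^n$, forward invariance of $\mc{S}$ implies $\bq(t)\in\mc{C}$ along any closed-loop trajectory. There is no real obstacle; the only mildly delicate point is recognizing that the ``top-zero'' block structure of $\bg$ together with $\by$ being configuration-only is what automatically guarantees $L_{\bg}\by \equiv \bzero$, reducing the relative degree check to a single rank condition on $\bA(\bq)$.
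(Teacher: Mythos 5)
Your proof is correct and takes essentially the same approach as the paper: both reduce the corollary to Theorem \ref{theorem:cbf-fbl} with $\gamma=2$, noting that the rank condition on $\bA(\bq)$ over $\mc{E}_1$ gives relative degree $2$ on $\mc{E}=\mc{E}_1\times\R^n$ and that the inclusions $\mc{S}\subset\mc{C}\times\R^n\subset\mc{E}\subseteq\mc{D}$ deliver the remaining hypotheses. You merely make explicit what the paper leaves implicit -- the block-structure computation showing $L_{\bg}\by\equiv\bzero$ and $L_{\bg}L_{\bf}\by=\bA(\bq)$, and the vacuousness of the $\lambda_i\geq\ell_\alpha$ conditions when $\gamma=2$.
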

\begin{proof}
    As $\rank(\bA(\bq))=p$ for all $\bq\in\mc{E}_1$, $\by$ has relative degree 2 on $\mc{E}$ and since $\mc{S}\subset\mc{C}\times\R^n$ and $\mc{C}\subset\mc{E}_1$, we have $\mc{S}\subset\mc{E}$. Finally, since $\mc{E}\subseteq\mc{D}$ the conditions of Theorem \ref{theorem:cbf-fbl} hold, implying that $h$ as in \eqref{eq:h-underactuated} is a CBF for \eqref{eq:robot-dyn-control-affine} on $\mc{S}$.
\end{proof}

Focusing on robotic systems \eqref{eq:robot-dyn}, rather than general control affine systems \eqref{eq:control-affine-dyn}, offers various benefits due to the structural properties of \eqref{eq:robot-dyn}. In particular, the relative degree of $\by$ depends only on $\bq$, implying that the relative degree can be verified over a lower-dimensional space.
This often allows one to restrict the constraint set so that $\by$ has relative degree 2 on $\mc{C}$ by construction -- a procedure exemplified in Sec. \ref{sec:quad}.
Moreover, when $\by$ has relative degree 2, the resulting CBF \eqref{eq:h-underactuated} may be defined with a general $\alpha\in\Kinfe$, rather than a smooth globally Lipschitz one as in \eqref{eq:h-fbl}. This formulation also does not require defining the $\lambda_i$ parameters in \eqref{eq:h-fbl}.

\section{Simulations and Hardware Experiments}\label{sec:sims}

\begin{figure*}
    \centering
    \includegraphics{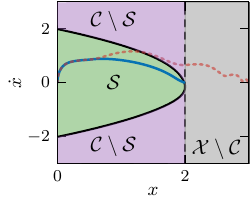}
    \hfill
    \includegraphics{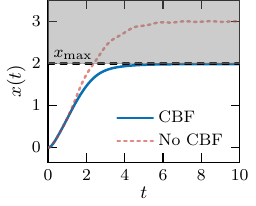}
    \hfill
    \includegraphics{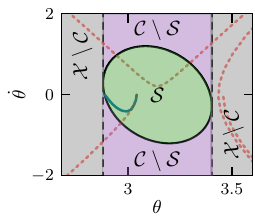}
    \hfill
    \includegraphics{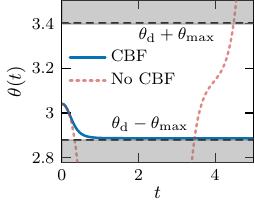}
    \vspace{-0.45cm}
    \caption{Safe sets and simulations results of the pendulum on a cart with a CBF placed on the cart's position (left) and the pendulum's angle (right). In each plot the dashed black lines denote the boundary of the constraint set and the gray regions denote the states where the constraint is violated. In the first and third plots, the green regions correspond to $\mc{S}$ and the purple region to $\mc{C}\setminus\mc{S}$.}\label{fig:cartpole}
    \vspace{-0.5cm}
\end{figure*}

\subsection{Pendulum on a Cart}
We illustrate\footnote{Code and further details of our implementation available at \url{https://github.com/maxhcohen/ReducedOrderModelCBFs.jl}} the ideas presented herein using a canonical example of an underactuated robotic system, the pendulum on a cart. The configuration $\bq=(x,\theta)\in\R\times\mathbb{S}^1\eqqcolon\Q$ consists of the planar position of the cart $x\in\R$ and the angle of a pendulum mounted on the cart $\theta\in\mathbb{S}^1$ with dynamics:
\begin{equation*}
    \begin{aligned}
        \bD(\bq) = &
        \begin{bmatrix}
            m_{c} + m_{p} & m_{p}l\cos\theta \\ m_{p}l\cos\theta & m_{p}l^2
        \end{bmatrix}, & 
        \bG(\bq) = & \begin{bmatrix}
            0 \\ m_{p}gl\sin\theta
        \end{bmatrix}, \\
        \bC(\bq,\dot{\bq}) = & \begin{bmatrix}
            0 & -m_{p}l\dot{\theta}\sin\theta \\ 0 & 0
        \end{bmatrix}, &
        \bB(\bq) = & \begin{bmatrix}
            1 \\ 0
        \end{bmatrix},
    \end{aligned}
\end{equation*}
where $m_c,m_p\in\R_{>0}$ denote the mass of the cart and pendulum, respectively, $l\in\R_{>0}$ denotes the length of the pendulum, and $g\in\R_{>0}$ denotes the acceleration due to gravity. We now demonstrate how the choice of output affects the ability to construct CBFs. Let $\by(\bq)=x$ so that our safety constraint depends only on the position of the cart. In this case, the decoupling matrix is $\bA(\bq) =\frac{m_{p}l^2}{\det\bD(\bq)}$, which has rank 1 for all $\bq\in \Q$ implying that any function of the form \eqref{eq:h-underactuated} with $\by(\bq)=x$ and $\psi$ satisfying \eqref{eq:grad-psi} is a CBF for this system. On the other hand, when $\by(\bq)=\theta$, we have $\bA(\bq) = \frac{-m_{p}l\cos\theta}{\det\bD(\bq)},$ which has rank 1 so long as $\cos\theta\neq0$. Hence, any function of the form \eqref{eq:h-underactuated} with $\psi$ satisfying \eqref{eq:grad-psi} is a CBF for this system provided that the constraint set $\mc{C}$ does not contain points such that $\cos\theta=0$. Using these observations, we construct two CBFs for the two different outputs using the configuration constraints $\psi(\by(\bq))=x_{\max} - x$ and $\psi(\by(\bq))=\theta_{\max}^2 - (\theta_{\mathrm{d}}- \theta)^2$, respectively, which require the position of the cart to remain less than $x_{\max}$ and requires the angle of the pendulum to satisfy $|\theta - \theta_{\mathrm{d}}|\leq \theta_{\max}$ with $\theta_{d}\in\mathbb{S}^1$ a desired angle of the pendulum. The gradients of each constraint are given by $\pdv{\psi}{\by}(x)=-1$ and $\pdv{\psi}{\by}(\theta)=-2(\theta_{\mathrm{d}}- \theta)$, respectively, which satisfy \eqref{eq:grad-psi} for $\mc{D}_1=\R$. These constraints are used to construct $\bk_{\psi}$ satisfying \eqref{eq:smooth-control} using \eqref{eq:sontag} and then used to construct CBFs as in \eqref{eq:h-underactuated} whose corresponding safe sets are illustrated in Fig. \ref{fig:cartpole}. These CBFs are subsequently used to construct controllers as in \eqref{eq:cbf-qp} that filter a nominal controller that attempts to drive the cart to a position beyond $x_{\max}$ and a nominal controller that applies no input, respectively. The results of applying such controllers are illustrated in Fig. \ref{fig:cartpole}. As guaranteed by Corollary \ref{proposition:underactuated}, such a controller ensures forward invariance of $\mc{S}$ and satisfaction of each constraint.

\subsection{Planar Quadrotor}\label{sec:quad}
We now apply our approach to a planar quadrotor with configuration $\bq=(x,z,\theta)\in\Q=\R^2\times\mathbb{S}^1$ consisting of the horizontal $x$ and vertical $z$ position of the quadrotor and the orientation $\theta$ of the quadrotor with respect to the horizontal plane.
The dynamics are in the form of \eqref{eq:robot-dyn} with
\cite{SreenathACC16-quad}:
\begin{equation*}
    \underbrace{
    \begin{bmatrix}
        m & 0 & 0 \\ 0 & m & 0 \\ 0 & 0 & I
    \end{bmatrix}}_{\bD(\bq)}
    \underbrace{
    \begin{bmatrix}
        \ddot{x} \\ \ddot{z} \\ \ddot{\theta}
    \end{bmatrix}}_{\ddot{\bq}}
    +
    \underbrace{
    \begin{bmatrix}
        0 \\ mg \\ 0
    \end{bmatrix}}_{\bG(\bq)}
    =
    \underbrace{
    \begin{bmatrix}
        \sin\theta & 0 \\ \cos\theta & 0 \\ 0 & -1
    \end{bmatrix}}_{\bB(\bq)}
    \underbrace{
    \begin{bmatrix}
        F \\ M
    \end{bmatrix}}_{\bu},
\end{equation*} 
where $\bC(\bq,\dot{\bq})=\bzero$, $m,I\in\R_{>0}$ are the mass and inertia, $g\in\R_{>0}$ is the acceleration due to gravity, and $F\in\R$ and $M\in\R$ are the thrust and moment applied by the propellers. Our objective is to design a controller that keeps the quadrotor's height above a specified value, which is captured by the output $\by(\bq)=z$ and the safety constraint $\psi(\by(\bq)) = z - z_{\min}$. To check if this constraint leads to a CBF via Corollary \ref{proposition:underactuated}, we first compute $\pdv{\psi}{\by}(z) = 1$ to find that our safety constraint $\psi$ satisfies \eqref{eq:grad-psi} for $\mc{D}_1=\R$. We then compute \eqref{eq:decoupling-matrix} to find that $\bA(\bq) = \frac{\cos\theta}{m}$ has rank 1 provided $\cos\theta\neq0$. However, since the configuration constraint places no limits on $\theta$, such points are contained in $\mc{C}$ and, consequently, this $\psi$ does not lead to a valid CBF. That $\bA$ drops rank on $\mc{C}$ indicates that these particular choices of $\by$ and $\psi$ are not compatible with the system's actuation capabilities, and must be refined so that $\bA$ has full rank on $\mc{C}$.
To this end, we modify our output function to $\by(\bq)=(z,\theta)$ and our constraint function to $\psi(\by(\bq)) = 1 - \tfrac{(z - z_{c})^2}{(z_c-z_{\min})^2} - \tfrac{\theta^2}{\theta_{\max}^2},$ which defines an ellipse in the $(z,\theta)$ space with center $(z_c,0)$, width $2(z_c-z_{\min})$, and height $2\theta_{\max}$. This constraint ensures that $z \geq z_{\min}$ and $|\theta|\leq \theta_{\max}$ whenever $\psi(\by(\bq)) \geq 0$. 
To check if $\psi$ yields a CBF, we compute $\pdv{\psi}{\by}(z,\theta)=[\tfrac{-2(z - z_{c})}{(z_c-z_{\min})^2}\;-2\theta]$, which satisfies \eqref{eq:grad-psi} for $\mc{D}_1=\R^2$, and $\bA(\bq)=\diag(\tfrac{\cos\theta}{m},-\tfrac{1}{I})$, where $\diag(\cdot)$ creates a diagonal matrix, which has rank 2 provided $\cos\theta\neq0$. By taking $\theta_{\max} < \tfrac{\pi}{2}$ we have the existence of a set $\mc{E}_1\supset\mc{C}$ such that $\rank(\bA(\bq))=2$ for all $\bq\in\mc{E}_1$, implying that this choice of $\by$ and $\psi$ leads to a CBF by Corollary \ref{proposition:underactuated}. Constructing a CBF using this constraint in \eqref{eq:h-underactuated}, where $\bk_{\psi}$ is from \eqref{eq:sontag}, and applying the resulting controller \eqref{eq:cbf-qp} to the system, where $\bk_{\rm{d}}$ attempts to stabilize the system to $x=0$, leads to the results in Fig. \ref{fig:quad_falling} and Fig. \ref{fig:planar_quad_plots}, where the quadrotor falls from its initial state to a height of $z_{\min}$ and maintains its height there for all time. 
The process outlined above emphasizes that designing a CBF requires carefully selecting the output $\by$ and constraint $\psi$ to ensure compatibility with the system's actuation capabilities. Importantly, the results in Sec. \ref{sec:feedback-lin} and \ref{sec:underactuated} guide this selection by providing verifiable conditions on $\by$ and $\psi$, which enables the automatic construction of a CBF when satisfied.

\begin{figure}
    \centering
    \includegraphics{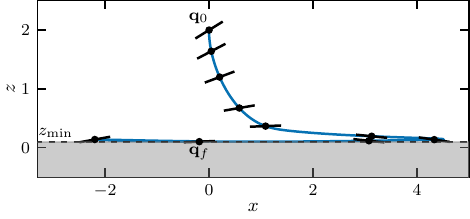}
    \vspace{-0.45cm}
    \caption{Simulations of the planar quadrotor whose CBF ensures that $z\geq z_{\min}$ where $\bq_0$ and $\bq_f$ denote the initial and final position of the quadrotor.}\label{fig:quad_falling}
    \vspace{-0.4cm}
\end{figure}

\begin{figure}
    \centering
    \includegraphics{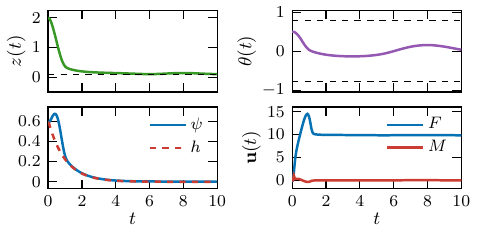}
    \vspace{-0.45cm}
    \caption{Simulation results illustrating the evolution of the planar quadrotor's height (top left), orientation (top right), the safety constraint and CBF (bottom left), and control inputs (bottom right). In the top plots, the dashed lines denote the limits imposed on $z$ and $\theta$ by the safety constraint.}
    \label{fig:planar_quad_plots}
    \vspace{-0.4cm}
\end{figure}

\subsection{Quadrotor Hardware Experiments}\label{sec:experiments}
We now extend the preceding example to a 3D quadrotor and illustrate the efficacy of our approach on hardware. The hardware platform is described in  \cite{CosnerICRA24} and is modeled as 
a control affine system \eqref{eq:control-affine-dyn}
with state $\bx=(\bp,\bq,\bv)\in\R^3\times \rm{SO}(3)\times\R^3$ representing the position $\bp$, orientation $\bq$ (represented as a quaternion), and velocity $\bv$, and control input $\bu=(\bomega,\tau)\in\mathfrak{so}(3)\times\R$, where $\bomega$ is the angular rate and $\tau$ is the thrust. A full expression of the dynamics can be found in \cite{CosnerICRA24}. Our control objective is to keep the quadrotor's height above $z_{\min}$, where $\bp=(x,y,z)$ and $z$ denotes the quadrotor's height. To this end, we choose our output\footnote{For the model described in \cite{CosnerICRA24}, the first component of $\by$ has relative degree two whereas the second and third have relative degree one. The theory developed in Sec. \ref{sec:feedback-lin} can be modified to account for such a situation at the expense of additional notation by transforming the output dynamics into a mixed relative degree cascaded system (cf. \cite{AndrewCDC22,CohenARC24}), but a formal presentation of such results is omitted here in the interest of space.} as $\by(\bx) = (z, q_{x}, q_{y})$, where $q_x$ and $q_y$ are components of the quaternion such that $\bq=q_{w} + q_{x}i + q_{y}j + q_{z}k$. Given this output,
we define $\psi(\by(\bx)) = z - z_{\min} - \lambda(2q_{x}^2 + 2q_{y}^2)$ where $\lambda>0$. This constraint ensures that $\psi(\by(\bx))\geq0\implies z\geq z_{\min}$ and requires the quadrotor's orientation to remain level when $z=z_{\min}$. 
Leveraging the constructions in Sec. \ref{sec:feedback-lin}, this leads to the CBF candidate:
\begin{equation*}
    h(\bx) = \psi(\by(\bx)) - \tfrac{1}{2}\|L_{\bf}\by(\bx) - \bk_{1}(\by(\bx))\|^2,
\end{equation*}
where $\bk_1\,:\,\R^3\rightarrow\R^3$ is defined as in \eqref{eq:sontag}.
This CBF is used to construct a safety filter as in \eqref{eq:cbf-qp}, where $\bk_{\mathrm{d}}$ corresponds to commands given via joystick that lift the quadrotor up before lowering it to the ground. Applying this safety filter to the system produces the results in Fig. \ref{fig:master_figure} and Fig. \ref{fig:3d_quad_barriers}, where $z$ remains above $z_{\min}$ and $h$ remains positive for all time.

\begin{figure}
    \centering
    \includegraphics{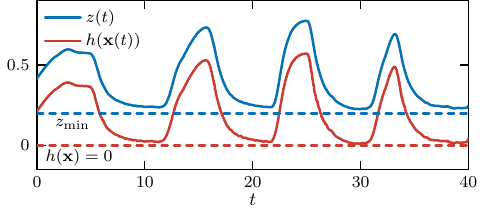}
    \vspace{-0.45cm}
    \caption{Experimental results (cf. Fig. \ref{fig:master_figure}) illustrating the evolution of the quadrotor's height (blue) and CBF (red).}
    \label{fig:3d_quad_barriers}
    \vspace{-0.6cm}
\end{figure}

\section{Conclusions}\label{sec:conclusions}
We presented a framework for synthesizing CBFs using ideas from feedback linearization, which were demonstrated both numerically and experimentally on underactuated robotic systems. Future research directions include characterizing the zero dynamics under CBF-based controllers.

\bibliographystyle{ieeetr}
\bibliography{biblio}

\begin{thebibliography}{10}

\bibitem{AmesTAC17}
A.~D. Ames, X.~Xu, J.~W. Grizzle, and P.~Tabuada, ``Control barrier function based quadratic programs for safety critical systems,'' {\em IEEE Trans. Autom. Control}, vol.~62, no.~8, pp.~3861--3876, 2017.

\bibitem{CohenLCSS23}
M.~H. Cohen, P.~Ong, G.~Bahati, and A.~D. Ames, ``Characterizing smooth safety filters via the implicit function theorem,'' {\em IEEE Contr. Syst. Lett.}, vol.~7, pp.~3890--3895, 2023.

\bibitem{KrsticTAC24}
M.~Krstic, ``Inverse optimal safety filters,'' {\em IEEE Trans. Autom. Control}, vol.~69, no.~1, pp.~16--31, 2024.

\bibitem{xu2017correctness}
X.~Xu, J.~W. Grizzle, P.~Tabuada, and A.~D. Ames, ``Correctness guarantees for the composition of lane keeping and adaptive cruise control,'' {\em IEEE Trans. Autom. Sci. Eng.}, vol.~15, no.~3, pp.~1216--1229, 2017.

\bibitem{HovakimyanLCSS23}
P.~Zhao, R.~Ghabcheloo, Y.~Cheng, H.~Abdi, and N.~Hovakimyan, ``Convex synthesis of control barrier functions under input constraints,'' {\em IEEE Contr. Syst. Lett.}, vol.~7, pp.~3102--3107, 2023.

\bibitem{HerbertCDC21}
J.~J. Choi, D.~Lee, K.~Sreenath, C.~J. Tomlin, and S.~L. Herbert, ``Robust control barrier–value functions for safety-critical control,'' in {\em Proc. Conf. Decis. Control}, pp.~6814--6821, 2021.

\bibitem{wabersich2023data}
K.~P. Wabersich, A.~J. Taylor, J.~J. Choi, K.~Sreenath, C.~J. Tomlin, A.~D. Ames, and M.~N. Zeilinger, ``Data-driven safety filters: Hamilton-jacobi reachability, control barrier functions, and predictive methods for uncertain systems,'' {\em IEEE Control Syst. Mag.}, vol.~43, no.~5, pp.~137--177, 2023.

\bibitem{SreenathACC16-quad}
G.~Wu and K.~Sreenath, ``Safety-critical control of a planar quadrotor,'' in {\em Proc. Amer. Control Conf.}, pp.~2252--2258, 2016.

\bibitem{AndrewCDC22}
A.~J. Taylor, P.~Ong, T.~G. Molnar, and A.~D. Ames, ``Safe backstepping with control barrier functions,'' in {\em Proc. Conf. Decis. Control}, pp.~5775--5782, 2022.

\bibitem{CohenARC24}
M.~H. Cohen, T.~G. Molnar, and A.~D. Ames, ``Safety-critical control for autonomous systems: Control barrier functions via reduced order models,'' {\em Annual Reviews in Control}, vol.~57, p.~100947, 2024.

\bibitem{AbelTAC23}
I.~Abel, D.~Steeves, M.~Krsti\'{c}, and M.~Jankovi\'{c}, ``Prescribed-time safety design for strict-feedback nonlinear systems,'' {\em IEEE Trans. Autom. Control}, vol.~69, no.~3, pp.~1464--1479, 2024.

\bibitem{SreenathACC16}
Q.~Nguyen and K.~Sreenath, ``Exponential control barrier functions for enforcing high relative-degree safety-critical constraints,'' in {\em Proc. Amer. Control Conf.}, pp.~322--328, 2016.

\bibitem{XuAutomatica18}
X.~Xu, ``Constrained control of input–output linearizable systems using control sharing barrier functions,'' {\em Automatica}, vol.~87, pp.~195--201, 2018.

\bibitem{WeiTAC22}
W.~Xiao and C.~Belta, ``High order control barrier functions,'' {\em IEEE Trans. Autom. Control}, vol.~67, no.~7, pp.~3655--3662, 2022.

\bibitem{TanTAC22}
X.~Tan, W.~S. Cortez, and D.~V. Dimarogonas, ``High-order barrier functions: robustness, safety and performance-critical control,'' {\em IEEE Trans. Autom. Control}, vol.~67, no.~6, pp.~3021--3028, 2022.

\bibitem{Isidori}
A.~Isidori, {\em Nonlinear Control Systems}.
\newblock Springer, third~ed., 1995.

\bibitem{jankovic2018robust}
M.~Jankovic, ``Robust control barrier functions for constrained stabilization of nonlinear systems,'' {\em Automatica}, vol.~96, pp.~359--367, 2018.

\bibitem{Cohen}
M.~H. Cohen and C.~Belta, {\em Adaptive and Learning-based Control of Safety-Critical Systems}.
\newblock Springer Nature, 2023.

\bibitem{BreedenAutomatica23}
J.~Breeden and D.~Panagou, ``Robust control barrier functions under high relative degree and input constraints for satellite trajectories,'' {\em Automatica}, vol.~155, 2023.

\bibitem{SontagSCL89}
E.~D. Sontag, ``A universal construction of {A}rtstein's theorem on nonlinear stabilization,'' {\em Syst. Control Lett.}, vol.~13, pp.~117--123, 1989.

\bibitem{Krstic}
M.~Krsti\'{c}, I.~Kanellakopoulus, and P.~Kokotovi\'{c}, {\em Nonlinear and Adaptive Control Design}.
\newblock Wiley, 1995.

\bibitem{CosnerICRA24}
R.~K. Cosner, I.~Sadalski, J.~K. Woo, P.~Culbertson, and A.~D. Ames, ``Generative modeling of residuals for real-time risk-sensitive safety with discrete-time control barrier functions,'' {\em arXiv preprint arXiv:2311.05802}, 2023.

\end{thebibliography}

\end{document}